\newcommand{\re}{{\rm e}}
\newcommand{\ri}{{\rm i}}
\newcommand{\rd}{{\rm d}}
\newcommand{\rD}{{\rm D}}
\newcommand{\norm}[1]{\lVert#1\rVert}
\newtheorem*{theorem}{Theorem}
\newtheorem*{definition}{Definition}
\begin{document}
\begin{CJK*}{GBK}{}


\title{Geometric phases between biorthogonal states}

\author{Xiao-Dong Cui}
\affiliation{ School of Physics, Shandong University, Jinan 250100, China}

\author{Yujun Zheng }
\email{electronic mail: yzheng@sdu.edu.cn}
\affiliation{ School of Physics, Shandong University, Jinan 250100, China}

\pacs{03.65.Vf}
%
\begin{abstract}
We investigate the evolution of a state which is dominated by a
finite-dimensional non-Hermitian time-dependent Hamiltonian operator
with a nondegenerate spectrum by using a biorthonormal approach.
The geometric phase between any two states, biorthogonal or not,
are generally derived by employing the generalized interference method.
The counterpart of Manini-Pistolesi non-diagonal geometric phase in
the non-Hermitian setting is taken as a typical example.
\end{abstract}

\maketitle

One of the axioms in conventional Hermitian quantum mechanics requires that
the Hamiltonian and other observables are represented by Hermitian
operators for mathematical convenience and physical reasonableness \cite{MFQM}.
This axiom has, however, been challenged by the $\mathcal{PT}$ and $\mathcal{CPT}$ symmetric properties
of Hamiltonians in quantum field theory \cite{PhysRevLett.80.5243,PhysRevLett.89.270401}.
The formulation of non-Hermitian quantum mechanics tolerates physicists to enlarge
the set of possible and even non-Hermitian Hamiltonian to describe novel physical
phenomena effectively and provide powerful numerical and analytical methods \cite{NHQM}.

As an important measurable physical quantity in conventional Hermitian quantum system,
geometric phase \cite{Proc.Ind.Acad.Sci.A.44.247,Proc.R.Soc.Lond.A.392.45} not
only has many theoretical derivatives \cite{PhysRevLett.52.2111,PhysRevLett.58.1593,
PhysRevLett.60.2339,AnnsPhys.228.205,PhysRevLett.85.3067,PhysRevLett.94.070406}
but also attracts much attention due to its potential application in quantum computation
\cite{PhysLettA.264.94,PhysRevLett.95.130501,PhysRevLett.109.170501,PhysRevLett.110.190501}.
The issue of geometric phase in non-Hermitian quantum systems was
considered firstly by Garrison and Wright \cite{Phys.Lett.A.128.177} and
subsequently by other contributors \cite{J.Phys.A:Math.Gen.23.5795,Europhys.Lett.13.199,
J.Phys.A:Math.Gen.29.2567,Phys.Lett.A.264.11,J.Math.Phys.49.082105,PhysRevA.86.064104} by
using a biorthonormal eigenbasis procedure.
Geometric phase is thence generalized in the complex-valued field, the real part
of which is the geometric phase of usual sense while the imaginary part makes
possible to investigate the geometric dilation or contraction of the modulo of
a wavefunction.
However, if the initial and final states are biorthogonal,
one can not extract any complex-valued phase information directly from their
inner product.
The geometric phases between biorthogonal states have thus not been defined yet.
In this manuscript standing on the geometric point of view,
we firstly re-define the concept of ``in-phase'' based on the generalized interference method, and hereafter generalize the relevant geometric quantities which will provide a deep understanding on geometric phases in a class of non-Hermitian quantum systems, and finally suggest a general formula for geometric phase between any two states, biorthogonal or not.

We consider an $N$-dimensional time-dependent non-Hermitian Hamiltonian $H(t)$ with
a non-degenerate instantaneous spectrum. There exists a complete biorthonormal set of basis vectors $\{|n(t)\rangle,~|\tilde{n}(t)\rangle:~n=1,2,\cdots,N\}$ obeying $H(t)|n(t)\rangle=E_n(t)|n(t)\rangle,~H^\dag(t)|\tilde{n}(t)\rangle=E_n^*(t)|\tilde{n}(t)\rangle$ \cite{J.Math.Phys.8.2039}, such that
\begin{eqnarray}
\label{bion}
\langle\tilde{m}(t)|n(t)\rangle=\delta_{mn},\quad \sum_{n=1}^N|n(t)\rangle\langle\tilde{n}(t)|=1.
\end{eqnarray}
In order to facilitate the subsequential deduction, we here define
the operation of tilde satisfying
\begin{eqnarray}
\label{tilde}
  \tilde~:~|\psi\rangle\rightarrow|\tilde{\psi}\rangle,\qquad(~\tilde{}~)^2=id,
\end{eqnarray}
The time parameter $t$ in the complete biorthonormal basis implies that this is a moving frame,
such that for any nonzero vector $|\psi(t)\rangle\in\overline{{\rm Span}(\{|n(t)\rangle\}_{n=1}^N)}$,
there exist $1$ binormalized vector $|\tilde{\psi}^\|(t)\rangle\in\overline{{\rm Span}(\{|\tilde{n}(t)\rangle\}_{n=1}^N)}$
and $N-1$ biorthogonal rays $\tilde{\psi}^\bot(t)\in\overline{{\rm Span}(\{|\tilde{n}(t)\rangle\}_{n=1}^N)}/\mathbb{C}-\{0\}$,
{\it i.e.}, $\langle\tilde{\psi}^\|(t)|\psi(t)\rangle=1,~\langle\tilde{\psi}^\bot(t)|\psi(t)\rangle=0,~
\langle\tilde{\psi}^\bot(t)|\psi^\bot(t)\rangle=1$.

Also, the state evolution of a non-Hermitian quantum system obeys the
non-Hermitian Schr\"{o}dinger equation ($\hbar=1$)
\begin{equation}
\label{sch}
H(t)|\psi(t)\rangle = \ri\frac{\rd}{\rd t}|\psi(t)\rangle,
\end{equation}
where $H(t)\neq H^{\dag}(t)$ is a $N$-dimensional matrix representation.
The non-hermiticity of $H(t)$ means that the evolution operator $U(t)$ is
non-unitary and it results in $\langle\psi(t)|\psi(t)\rangle\neq\langle\psi(0)|\psi(0)\rangle,~\forall t>0$.
Usually, it is helpful to introduce the adjoint Schr\"{o}dinger equation
corresponding to Eq.~(\ref{sch}),
\begin{equation}
\label{adsch}
H^{\dag}(t)|\tilde{\psi}(t)\rangle = \ri\frac{\rd}{\rd t}|\tilde{\psi}(t)\rangle.
\end{equation}
Based on Eqs.~(\ref{sch}) and (\ref{adsch}), one can obtain the following relations
\begin{equation}
\frac{\rd}{\rd t}\Big(\langle\tilde{\psi}(t)|\psi(t)\rangle\Big)=0.
\end{equation}
and the binormalization of quantum states for non-Hermitian quantum system
\begin{equation}
\label{bin}
\langle\tilde{\psi}(t)|\psi(t)\rangle=\langle\tilde{\psi}(0)|\psi(0)\rangle=1,~\forall t\geqslant0.
\end{equation}
The binormalization is invariant under local gauge transformation or complex
scaling transformation, namely,
\begin{eqnarray}
\label{gt}
\begin{array}{lll}
|\psi\rangle&\mapsto&|\psi'\rangle=\re^{\ri\zeta}|\psi\rangle, \\
|\tilde{\psi}\rangle&\mapsto&|\tilde{\psi}'\rangle=\re^{\ri\tilde{\zeta}}|\tilde{\psi}\rangle,
\end{array}
\end{eqnarray}
where $\zeta,\tilde{\zeta}\in\mathbb{C}$ are complex numbers, and they satisfy
the requirement $\zeta=\tilde{\zeta}^{*}$.
It should be stressed here that only the state vector $|\psi(t)\rangle\in\overline{{\rm Span}(\{|n(t)\rangle\}_{n=1}^N)}$ rather than its dual $|\tilde{\psi}(t)\rangle\in\overline{{\rm Span}(\{|\tilde{n}(t)\rangle\}_{n=1}^N)}$ describes a quantum state of the physical system, although they stand equally from Eqs.~(\ref{sch}) and (\ref{adsch}).

In conventional Hermitian quantum mechanics, the maximum interference formula
between any two non-orthogonal normalized rays $A,B$ is written by
\begin{eqnarray}
\label{if}
I_{max}&=&\sup_{\alpha\in\mathbb{R}}\norm{A\re^{\ri\alpha}+B}^2\nonumber\\
&=&\sup_{\alpha\in\mathbb{R}} \langle A\re^{-\ri\alpha}+B|A\re^{\ri\alpha}+B\rangle,
\end{eqnarray}
which induces the Pancharatnam connection
$\mathcal{A}^{P}={\rm Im}\langle A|B\rangle$ \cite{PhysRevLett.60.2339}.
In order to investigate the issue of geometric phase in the above
non-Hermitian setting, the Pancharatnam connection need modifying by generalizing
the maximum interference formula Eq.~(\ref{if}).
According to the bi-normalization Eq.~(\ref{bin}) with local gauge
transformation Eq.~(\ref{gt}), the generalized interference formula
$\mathfrak{I}^{2}$ between any two non-biorthonormal rays $\psi_1,\psi_2$ is defined as \cite{PhysRevA.86.064104}
\begin{eqnarray}
\label{gif}
\mathfrak{I}^2 &=& \langle\tilde{\psi}_1\re^{-\ri\tilde{\theta}^{*}}+
    \tilde{\psi}_2|\psi_1\re^{\ri\theta}+\psi_2\rangle \nonumber\\
  &=& \langle\tilde{\psi}_1\re^{-\ri\theta}+\tilde{\psi}_2|\psi_1\re^{\ri\theta}+\psi_2\rangle,
    \quad\theta\in\mathbb{C}.
\end{eqnarray}
\begin{definition}
When the generalized interference intensity $\mathfrak{I}^{2}$ is
stationary with respect to the complex-valued phase $\theta$ such that $\sqrt{\frac{\langle\tilde{\psi}_1|\psi_2\rangle}{\langle\tilde{\psi}_2|\psi_1\rangle}}=1$,
then $|\psi_1\rangle$ and $|\psi_2\rangle$ are said to be ``in phase'' or parallel.
\end{definition}
According to the definition, the generalized Pancharatnam connection $\mathcal{A}^{GP}$ is given by
\begin{equation}
\label{gpc}
\mathcal{A}^{GP}=\sqrt{\frac{\langle\tilde{\psi}_1|\psi_2\rangle}{\langle\tilde{\psi}_2|\psi_1\rangle}}-1.
\end{equation}
As the infinitesimal version of the real-valued Pancharatnam
connection ${\rm Im}\langle B(s)|B(s+\rd s)\rangle=0$ can induce a parallel
transport law ${\rm Im}\langle B(s)|\frac{\rd}{\rd s}B(s)\rangle=0$ by
\begin{equation}
\langle B(s)|B(s+\rd s)\rangle=1+\Big\langle B(s) \Big| \frac{\rd}{\rd s}B(s) \Big\rangle\rd s+
    \mathcal{O}(\rd s^2),
\end{equation}
the counterpart of the generalized Pancharatnam connection
$\sqrt{\frac{\langle\tilde{\psi}(s)|\psi(s+\rd s)\rangle}{\langle\tilde{\psi}(s+\rd s)|\psi(s)\rangle}}-1=0$
can also give a parallel transport law in the non-Hermitian setting
\begin{equation}
\Big\langle\tilde{\psi}(s)\Big|\frac{\rd}{\rd s}\psi(s)\Big\rangle=0,\quad\Big\langle\frac{\rd}{\rd s}\tilde{\psi}(s)\Big|\psi(s)\Big\rangle=0 ,
\end{equation}
by
\begin{eqnarray}
\sqrt{\frac{\langle\tilde{\psi}(s)|\psi(s+\rd s)\rangle}{\langle\tilde{\psi}(s+\rd s)|\psi(s)\rangle}}
&=& 1+ \Big\langle\tilde{\psi}(s) \Big|\frac{\rd}{\rd s}\psi(s) \Big\rangle\rd s+
    \mathcal{O}(\rd s^2)\nonumber\\
&=& 1- \Big\langle\frac{\rd}{\rd s}\tilde{\psi}(s) \Big|\psi(s) \Big\rangle\rd s+
    \mathcal{O}(\rd s^2),\nonumber\\
\end{eqnarray}
here bi-normalization Eq.~(\ref{bin}) has been used.
It should be noted that the parallel transport law
$\sqrt{\frac{\langle\tilde{\psi}(s)|\psi(s+\rd s)\rangle}{\langle\tilde{\psi}(s+\rd s)|\psi(s)\rangle}}-1$
must be equal to $0$ rather than any other complex numbers,
because any complex number (except $0$) can be expressed as an exponential
of a complex number which will be involved into the complex-valued phase $\theta$.
Moreover, the infinitesimal version of the generalized Pancharatnam
connection Eq.~(\ref{gpc}) gives
\begin{eqnarray}
\mathcal{A}^{GP}(s)&=&\Big\langle\tilde{\psi}(s)\Big|\frac{\rd}{\rd s}\psi(s)\Big\rangle,
\end{eqnarray}
which transform under the gauge transformation Eq.~(\ref{gt}) as follow,
\begin{eqnarray}
\mathcal{A}^{GP}(s) &\mapsto& \mathcal{A}^{GP}(s)+\ri\frac{\rd\zeta}{\rd s}.
\end{eqnarray}
The tangent vector $|\frac{\rd}{\rd s}\psi(s)\rangle$ is not gauge covariant,
\begin{eqnarray}
 \Big|\frac{\rd}{\rd s}\psi(s) \Big\rangle &\mapsto& \re^{\ri\zeta} \bigg( \Big|\frac{\rd}{\rd s}\psi(s) \Big\rangle+\ri\frac{\rd\zeta}{\rd s} \Big|\psi(s) \Big\rangle\bigg).
\end{eqnarray}

One can check that the covariant derivative $\frac{\rD}{\rd s}$ can be defined as
\begin{eqnarray}
\label{cd}
\frac{\rD}{\rd s}|\psi(s)\rangle &=& \Big(\frac{\rd}{\rd s}-
           \mathcal{A}^{GP}(s)\Big)|\psi(s)\rangle.
\end{eqnarray}
Likewise, the duals of Eqs.~(\ref{gif})-(\ref{cd}) can be obtained by the operation of tilde Eq.~(\ref{tilde}).
Hence, there exists a gauge invariant quantity $\langle\frac{\tilde{\rD}}{\rd s}\tilde{\psi}(s)|\frac{\rD}{\rd s}\psi(s)\rangle$
which can be used to defined a metric on ray space,
\begin{equation}
\label{metric}
\rd\mathfrak{L}^2=\Big\langle\frac{\tilde{\rD}}{\rd s}\tilde{\psi}(s)\Big| \frac{\rD}{\rd s}\psi(s)\Big\rangle\rd s^2.
\end{equation}
The metric Eq.~(\ref{metric}) then determines the geodesic in ray space by
variation of the length $I(\mathcal{C})$
\begin{eqnarray}
I(\mathcal{C})=\int_\mathcal{C}\rd\mathfrak{L}=\int_\mathcal{C}\sqrt{\rd \mathfrak{L}^2}=\int_\mathcal{C}\sqrt{\Big\langle\frac{\tilde{\rD}}{\rd s}\tilde{\psi}(s)\Big|\frac{\rD}{\rd s}\psi(s)\Big\rangle}~\rd s,\nonumber\\
\end{eqnarray}
from which one can obtain a pair of geodesic equations,
\begin{eqnarray}
\label{gd}
\frac{\rD^2}{\rd s^2}|\psi(s)\rangle=0 ,\quad
\frac{\tilde{\rD}^2}{\rd s^2}|\tilde{\psi}(s)\rangle=0.
\end{eqnarray}
Equations in Eq.~(\ref{gd}) are gauge covariant under local gauge transformation Eq.~(\ref{gt}).
It should be stressed that the equations in Eq.~(\ref{gd}) must hold simultaneously.
When the generalized interference intensity $\mathfrak{I}^2$ in Eq.~(\ref{gif})
is stationary with respect to the complex-valued phase $\theta$,
the generalized Pancharatnam phase can be obtained,
\begin{equation}
\theta_{1,2}^{GP}=-\frac{\ri}{2}\log\frac{\langle\tilde{\psi}_1|\psi_2\rangle}{\langle\tilde{\psi}_2|\psi_1\rangle}.
\end{equation}

\begin{theorem}
Let the two non-biorthogonal states $|\psi_1\rangle,|\psi_2\rangle$ be connected by a geodesic $G^{1,2}$ satisfying Eq.~(\ref{gd}), then the generalized Pancharatnam phase $\theta_{1,2}^{GP}$ is given by
\begin{equation}
\label{proof}
\theta_{1,2}^{GP}=-\ri\int_{G^{1,2}}\mathcal{A}^{GP}£¬\nonumber
\end{equation}
where $\mathcal{A}^{GP}=\langle\tilde{\psi}|\rd\psi\rangle$ is the connection 1-form.
\end{theorem}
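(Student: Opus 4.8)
The plan is to exploit gauge covariance. I would first show that the two sides of Eq.~(\ref{proof}) transform identically under the local (complex) gauge transformations Eq.~(\ref{gt}), so that their difference is gauge invariant; then I would evaluate both sides in one conveniently chosen gauge, where the identity is trivial, and conclude in general by invariance.

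For the covariance step: under $|\psi_1\rangle\mapsto\re^{\ri\zeta_1}|\psi_1\rangle$, $|\psi_2\rangle\mapsto\re^{\ri\zeta_2}|\psi_2\rangle$, together with the dual rule $|\tilde\psi_j\rangle\mapsto\re^{\ri\zeta_j^*}|\tilde\psi_j\rangle$ that keeps binormalization intact, the ratio $\langle\tilde\psi_1|\psi_2\rangle/\langle\tilde\psi_2|\psi_1\rangle$ picks up the factor $\re^{2\ri(\zeta_2-\zeta_1)}$, so $\theta_{1,2}^{GP}\mapsto\theta_{1,2}^{GP}+(\zeta_2-\zeta_1)$. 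On the other side, the transformation law $\mathcal{A}^{GP}\mapsto\mathcal{A}^{GP}+\ri\,\rd\zeta/\rd s$ established above gives $-\ri\int_{G^{1,2}}\mathcal{A}^{GP}\mapsto-\ri\int_{G^{1,2}}\mathcal{A}^{GP}+(\zeta_2-\zeta_1)$. Hence $\theta_{1,2}^{GP}+\ri\int_{G^{1,2}}\mathcal{A}^{GP}$ is gauge invariant, and it suffices to show it vanishes in one gauge.

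Next I would pass to the parallel-transport gauge along $G^{1,2}$, parametrized affinely by $s\in[0,1]$ (an affine reparametrization leaves the geodesic equations Eq.~(\ref{gd}) invariant), writing $|\psi(0)\rangle,|\psi(1)\rangle$ for the endpoint representatives. Taking the complex function $\zeta(s)=\ri\int_0^s\mathcal{A}^{GP}(s')\,\rd s'$ makes $\mathcal{A}^{GP}(s)\equiv0$ along the curve; binormalization, through $\rd\langle\tilde\psi|\psi\rangle/\rd s=0$, then makes the dual connection vanish simultaneously. Consequently, along the curve the covariant derivatives of Eq.~(\ref{cd}) collapse to ordinary derivatives, so Eq.~(\ref{gd}) becomes $\rd^2|\psi(s)\rangle/\rd s^2=0$ and $\rd^2|\tilde\psi(s)\rangle/\rd s^2=0$, with affine solutions $|\psi(s)\rangle=|a\rangle+s|b\rangle$ and $|\tilde\psi(s)\rangle=|\tilde a\rangle+s|\tilde b\rangle$. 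The connection-integral term on the right of Eq.~(\ref{proof}) is then manifestly zero in this gauge.

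It remains to evaluate $\theta_{1,2}^{GP}$ in the same gauge. The parallel conditions $\langle\tilde\psi(s)|\rd\psi(s)/\rd s\rangle=0$ and $\langle\rd\tilde\psi(s)/\rd s|\psi(s)\rangle=0$, valid for all $s$, force $\langle\tilde a|b\rangle=\langle\tilde b|b\rangle=\langle\tilde b|a\rangle=0$, while binormalization gives $\langle\tilde a|a\rangle=1$; hence $\langle\tilde\psi(0)|\psi(1)\rangle=\langle\tilde a|a+b\rangle=1=\langle\tilde a+\tilde b|a\rangle=\langle\tilde\psi(1)|\psi(0)\rangle$, so the endpoints are ``in phase'' in the sense of the Definition and $\theta_{1,2}^{GP}=-\tfrac{\ri}{2}\log1=0$. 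Thus $\theta_{1,2}^{GP}+\ri\int_{G^{1,2}}\mathcal{A}^{GP}$ vanishes in this gauge, hence in every gauge, which is Eq.~(\ref{proof}). The main obstacle I anticipate is the last two steps: checking that imposing the parallel gauge on $|\psi\rangle$ simultaneously trivializes the connection of its biorthogonal partner $|\tilde\psi\rangle$ — so that \emph{both} equations in Eq.~(\ref{gd}) linearize at once — and that the affine geodesic, constrained by binormalization, genuinely meets the in-phase condition rather than some weaker relation.
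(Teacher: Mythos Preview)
Your proposal is correct and follows essentially the same route as the paper: pass to the parallel-transport gauge along the geodesic, so that $\mathcal{A}^{GP}\equiv0$ reduces Eq.~(\ref{gd}) to an affine equation, verify from the resulting straight-line form together with binormalization that the endpoints are ``in phase'' (the paper does this via the auxiliary function $q(s)=\sqrt{\langle\tilde\psi_1|\varphi(s)\rangle/\langle\tilde\varphi(s)|\psi_1\rangle}-1$ and $\dot q(0)=0$, while you read off the inner products $\langle\tilde a|b\rangle=\langle\tilde b|a\rangle=\langle\tilde b|b\rangle=0$ directly), and then invoke gauge covariance to reach the general case. The only organizational difference is that you front-load the gauge-invariance argument, whereas the paper performs the explicit gauge transformation $|\psi(s)\rangle=\re^{\ri\theta(s)}|\varphi(s)\rangle$ at the end; the content is the same.
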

\begin{proof}Consider a geodesic $|\varphi(s)\rangle$ starting from $|\varphi(0)\rangle=|\psi_1\rangle$ and ending in the ray $\psi_2\ni|\varphi(1)\rangle$ satisfying $\mathcal{A}^{GP}(s)=0$, then geodesic equation Eq.~(\ref{gd}) reduces to $\frac{\rd^2}{\rd s^2}|\varphi(s)\rangle=0$, whose solution is a straight line described by
\begin{eqnarray}
\label{sl}
|\varphi(s)\rangle=(1-s)|\psi_1\rangle+s|\varphi(1)\rangle,~\forall s\in[0,1].
\end{eqnarray}
Let $q(s)=\sqrt{\frac{\langle\tilde{\psi}_1|\varphi(s)\rangle}{\langle\tilde{\varphi}(s)|\psi_1\rangle}}-1$.
It can be verified that $q(0)=0$ and $\dot{q}(0)=0$ due to $\mathcal{A}^{GP}(s)=0$ on the geodesic $|\varphi(s)\rangle$.
By inserting Eq.~(\ref{sl}) into $\dot{q}(0)=0$, one can obtain
\begin{eqnarray}
\label{csl}
\langle\tilde{\psi}_1|\varphi(1)\rangle=\langle\tilde{\varphi}(1)|\psi_1\rangle,
\end{eqnarray}
where we use the binormalization condition
 $\langle\tilde{\psi}_1|\psi_1\rangle=\langle\psi_1|\tilde{\psi}_1\rangle=1$.
By inserting Eqs.~(\ref{sl}) and (\ref{csl}) into $q(s)$,
then we have $q(s)=0,~\forall s\in[0,1]$, which means $|\psi_1\rangle$
and $|\varphi(s)\rangle$ are ``in phase''.
Due to the gauge covariance of the geodesic equation Eq.~({\ref{gd}}),
let $|\psi(s)\rangle=\re^{\ri\theta(s)}|\varphi(s)\rangle$,
$\forall s\in[0,1]$, with the boundary condition $\theta(0)=0$
and $\theta(1)=\theta_{1,2}^{GP}$, then $|\psi(s)\rangle$ is
still a geodesic linking $|\psi_1\rangle$ to $|\psi_2\rangle$,
which is denoted by $G^{1,2}$.
And finally, $-\ri\int_{G^{1,2}}\mathcal{A}^{GP}=
\int_0^1\frac{\rd}{\rd s}\theta(s)\rd s=\theta_{1,2}^{GP}$. 
\end{proof}
According to the theorem, one can link
$N$ vertices $|\psi_1\rangle,|\psi_2\rangle,\cdots,|\psi_N\rangle$
one-by-one by $N-1$ geodesics $G^{1,2},G^{2,3},\cdots,G^{N-1,N}$ to
obtain the accumulated generalized Pancharatnam phase
$\theta_{1,2,\cdots,N}^{GP}$ along the continuous curve
$G^{open}=G^{1,2}+G^{2,3}+\cdots+G^{N-1,N}$ by
\begin{eqnarray}
\theta_{1,2,\cdots,N}^{GP} &=& -\ri\int_{G^{open}}\mathcal{A}^{GP}\nonumber\\
  &=&-\ri\sum_{n=1}^{N-1}\int_{G^{n,n+1}}\mathcal{A}^{GP}\nonumber\\
  &=&\sum_{n=1}^{N-1}\theta_{n,n+1}^{GP}.
\end{eqnarray}

It should be noted that $\theta_{1,2,\cdots,N}^{GP}$ is not
gauge invariant under local gauge transformation Eq.~(\ref{gt})
and thence involves in any possible phase including dynamical phase.
However, if $|\psi_N\rangle$ is linked back to $|\psi_1\rangle$ by a geodesic $G^{N,1}$, then the curve $G^{closed}=G^{1,2}+G^{2,3}+\cdots+G^{N-1,N}+G^{N,1}$
is continuous and closed such that $\theta_{1,2,\cdots,N}^{GP}$ is gauge invariant,
\begin{eqnarray}
\theta_{1,2,\cdots,N,1}^{GP} &=& -\ri\oint_{G^{closed}}\mathcal{A}^{GP}\nonumber\\
  &=& -\ri\sum_{n=1}^{N}\int_{G^{n,n+1\bmod N}}\mathcal{A}^{GP}\nonumber\\
  &=& \sum_{n=1}^{N}\theta_{n,n+1\bmod N}^{GP}.
\end{eqnarray}
Here, the geodesic $G^{N,1}$ is added to remove any possible phase,
including dynamical phase, which can be produced or removed by local
gauge transformation Eq.~(\ref{gt}).
Hence, the accumulated generalized Pancharatnam phase $\theta_{1,2,\cdots,N}^{GP}$
is purely geometrical.
As $N\rightarrow\infty$, the curve $G^{open}$ becomes smooth while $G^{N,1}$ is still unchanged.
The geometric phase difference between
$|\psi(1)\rangle=|\psi_N\rangle$ and $|\psi(0)\rangle=|\psi_1\rangle$,
 $\lim_{N\rightarrow\infty}\theta_{1,2,\cdots,N,1}^{GP}$,
can be calculated by
\begin{eqnarray}
\label{nvgp}
&&\lim_{N\rightarrow\infty}\theta_{1,2,\cdots,N,1}^{GP}\nonumber\\
&=& \lim_{N\rightarrow\infty}\sum_{n=1}^{N}\theta_{n,n+1\bmod N}^{GP}\nonumber\\
&=&\lim_{N\rightarrow\infty}\left\{-\frac{\ri}{2}\log\left[\frac{\langle\tilde{\psi}_{1}|
  \psi_{2}\rangle}{\langle\tilde{\psi}_{2}|\psi_{1}\rangle}
\cdots\frac{\langle\tilde{\psi}_{N-1}|\psi_{N}\rangle}
   {\langle\tilde{\psi}_{N}|\psi_{N-1}\rangle}\frac{\langle\tilde{\psi}_{N}|
   \psi_{1}\rangle}{\langle\tilde{\psi}_{1}|\psi_{N}\rangle}\right]\right\}\nonumber\\
&=&\lim_{N\rightarrow\infty}\left\{-\frac{\ri}{2}\log\left[\frac{\langle\tilde{\psi}_{N}|
   \psi_{1}\rangle}{\langle\tilde{\psi}_{1}|\psi_{N}\rangle}\cdot\frac{\prod_{n=1}^{N}\langle\tilde{\psi}_{n}|\psi_{n}+
   \frac{\rd}{\rd s}\psi_{n}\Delta s\rangle}{\prod_{n=1}^{N}\langle\tilde{\psi}_{n}+
   \frac{\rd}{\rd s}\tilde{\psi}_{n}\Delta s|\psi_{n}\rangle}\right]\right\}\nonumber\\
&=&-\frac{\ri}{2}\log\frac{\langle\tilde{\psi}(1)|\psi(0)\rangle}{\langle\tilde{\psi}(0)|\psi(1)\rangle}-
  \ri\int_0^1 \Big\langle\tilde{\psi}(s) \Big|\frac{\rd}{\rd s}\psi(s) \Big\rangle\rd s\nonumber\\
&=&-\lim_{N\rightarrow\infty}\theta_{N,N-1,\cdots,1,N}^{GP}  ,
\end{eqnarray}
where $\lim_{N\rightarrow\infty}\theta_{N,N-1,\cdots,1,N}^{GP}$ represents
the geometric phase difference between $|\psi(0)\rangle$ and $|\psi(1)\rangle$.

Based on the above preparation, we now discuss the geometric
phase between initial and final states no matter whether they are
non-biorthogonal or biorthogonal, $i.e.$, $\langle\tilde{\psi}(0)|\psi(t)\rangle\neq0$ or
$\langle\tilde{\psi}(0)|\psi(t)\rangle=0$.

{\it Non-biorthogonal case} --- The evolving state $|\psi(t)\rangle$ starting
from the initial state $|\psi(0)\rangle$ is governed by the non-Hermitian
Schr\"{o}dinger equation Eq.~(\ref{sch}).
The geometric phase $\gamma^{geo}(0,t)$ between $|\psi(0)\rangle$
and $|\psi(t)\rangle$ can be calculated by Eq.~(\ref{nvgp}),
\begin{eqnarray}
\label{nogp}
\gamma^{geo}(0,t)&=& -\frac{\ri}{2}\log\frac{\langle\tilde{\psi}(0)|\psi(t)\rangle}{\langle
   \tilde{\psi}(t)|\psi(0)\rangle}+\ri\int_0^t \Big\langle\tilde{\psi}(t')
      \Big|\frac{\rd}{\rd t'}\psi(t') \Big\rangle\rd t' \nonumber\\
  &=& -\frac{\ri}{2}\log\frac{\langle\tilde{\psi}(0)|\psi(t)\rangle}{\langle\tilde{\psi}(t)|\psi(0)\rangle}+\int_0^t\langle\tilde{\psi}(t')|H(t)|\psi(t')\rangle\rd t' \nonumber\\
  &=& \theta^{GP}(0,t)-\gamma^{dyn}(0,t).
\end{eqnarray}
It should be noted in Eq.~(\ref{nogp}) that the existence of $\gamma^{geo}(0,t)$
as well as $\theta^{GP}(0,t)$ merely depends on whether the initial state
is biorthogonal to the final state rather than any intermediately traveled state,
while the dynamical phase $\gamma^{dyn}(0,t)$ continuously exists.

{\it Biorthogonal case} --- Due to the biorthogonality between the initial
state $|\psi(0)\rangle$ and the final state $|\psi(t)\rangle$,
{\it i.e.}, $\langle\tilde{\psi}(0)|\psi(t)\rangle=0$,
the geometric phase between them can not be evaluated by Eq.~(\ref{nogp})
directly for $\log0$ is not defined mathematically.
However, if an intermediately traveled state $|\psi(t_1)\rangle$ is
non-biorthogonal to both the initial and the final states,
then the geometric phase $\gamma^{geo}(0,t)$ between the
initial state $|\psi(0)\rangle$ and the final state $|\psi(t)\rangle$
can still be calculated indirectly by Eq.~(\ref{nogp}),
\begin{eqnarray}
\label{ogp}
\gamma^{geo}(0,t) &=& \gamma^{geo}(0,t_1)+\gamma^{geo}(t_1,t) \nonumber\\
   &=& -\frac{\ri}{2}\log\frac{\langle\tilde{\psi}(0)|\psi(t_1)\rangle\langle\tilde{\psi}(t_1)
   |\psi(t)\rangle}{\langle\tilde{\psi}(t)|\psi(t_1)\rangle\langle\tilde{\psi}(t_1)|\psi(0)\rangle} \nonumber\\
   & &\qquad+\int_{0}^{t}\langle\tilde{\psi}(t')|H(t)|\psi(t')\rangle\rd t'.
\end{eqnarray}
Here, the intermediately traveled state $|\psi(t_1)\rangle$ acts as
a torchbearer to guarantee that the geometric phase difference can
be preserved and delivered from the initial state to the final biorthogonal state.
Besides, $|\psi(t_1)\rangle$ does not interrupt the process of the state evolution.
Seen from another perspective, both the initial and the final states are
projected onto the intermediately traveled state, and the total geometric
phase difference is equal to the difference between $\gamma^{geo}(0,t_1)$
and $\gamma^{geo}(t,t_1)$,
\begin{equation}
\label{ogp2}
\gamma^{geo}(0,t)=\gamma^{geo}(0,t_1)-\gamma^{geo}(t,t_1).
\end{equation}
Hence, the intermediately traveled state $|\psi(t_1)\rangle$ is
unnecessary because it can be replaced with any state $|a\rangle$
which is non-biorthogonal to both the initial and final states to
implement Eq.~(\ref{ogp2}),
\begin{eqnarray}
\label{ogp3}
\gamma^{geo}(0,t) &=& -\frac{\ri}{2}\log\frac{\langle\tilde{\psi}(0)|a\rangle\langle\tilde{a}
   |\psi(t)\rangle}{\langle\tilde{\psi}(t)|a\rangle\langle\tilde{a}|\psi(0)\rangle} \nonumber\\
   & &\qquad+\int_{0}^{t}\langle\tilde{\psi}(t')|H(t)|\psi(t')\rangle\rd t'.
\end{eqnarray}
The first term in Eq.~(\ref{ogp3}) can be obtained by modifying Eq.~(\ref{gif}),
\begin{equation}
\mathfrak{I}^2=\Big\langle\re^{-\ri\theta}\langle\tilde{\psi}(0)|a\rangle\tilde{a}+
  \langle\tilde{\psi}(t)|a\rangle\tilde{a}~\Big|~a\langle\tilde{a}|\psi(0)\rangle\re^{\ri\theta}+
  a\langle\tilde{a}|\psi(t)\rangle\Big\rangle.
\end{equation}
The second term in Eq.~(\ref{ogp3}) is to remove the dynamical
phase off the final state $|\psi(t)\rangle$.

As a typical example, we consider the off-diagonal geometric phases \cite{PhysRevLett.85.3067}
in the non-Hermitian setting by using Eq.~(\ref{ogp3}):
two states $|j(0)\rangle$ and $|k(0)\rangle$ evolve adiabatically to $|j(t)\rangle$ and $|k(t)\rangle$,
respectively, such that $\langle\tilde{j}(0)|j(t)\rangle=0$ and $\langle\tilde{k}(0)|k(t)\rangle=0$.
We can find a state $|a\rangle$ which is not biorthogonal to $|j(0)\rangle$,
$|j(t)\rangle$, $|k(0)\rangle$, or $|k(t)\rangle$.
Then the off-diagonal geometric phases $\gamma^{geo}_{jk}$ is given by
\begin{eqnarray}
\gamma^{geo}_{jk}&=&\gamma^{geo}[|j(0)\rangle,|a\rangle,|k(t)\rangle]+\gamma^{geo}[|k(0)\rangle,|a\rangle,|j(t)\rangle]\nonumber\\
&&+\gamma^{geo}_{j}(0,t)+\gamma^{geo}_k(0,t),
\end{eqnarray}
where
\begin{eqnarray}
\gamma^{geo}[|j(0)\rangle,|a\rangle,|k(t)\rangle]=-\frac{\ri}{2}
 \log\frac{\langle\tilde{j}(0)|a\rangle\langle\tilde{a}|k(t)\rangle\langle\tilde{k}(t)|j(0)
 \rangle}{\langle\tilde{a}|j(0)\rangle\langle\tilde{k}(t)|a\rangle\langle\tilde{j}(0)|k(t)\rangle} , \nonumber\\
\gamma^{geo}[|k(0)\rangle,|a\rangle,|j(t)\rangle]=-\frac{\ri}{2}
\log\frac{\langle\tilde{k}(0)|a\rangle\langle\tilde{a}|j(t)\rangle\langle\tilde{j}(t)|k(0)
\rangle}{\langle\tilde{a}|k(0)\rangle\langle\tilde{j}(t)|a\rangle\langle\tilde{k}(0)|j(t)\rangle} ,\nonumber\\
\end{eqnarray}
and
\begin{eqnarray}
\gamma^{geo}_{j}(0,t)&=&-\frac{\ri}{2}\log\frac{\langle\tilde{j}(0)|a\rangle\langle\tilde{a}
   |j(t)\rangle}{\langle\tilde{j}(t)|a\rangle\langle\tilde{a}|j(0)\rangle} \nonumber\\
   & &\qquad+\int_{0}^{t}\langle\tilde{j}(t')|H(t)|j(t')\rangle\rd t' ,\nonumber\\
\gamma^{geo}_{k}(0,t)&=&-\frac{\ri}{2}\log\frac{\langle\tilde{k}(0)|a\rangle\langle\tilde{a}
   |k(t)\rangle}{\langle\tilde{k}(t)|a\rangle\langle\tilde{a}|k(0)\rangle} \nonumber\\
   & &\qquad+\int_{0}^{t}\langle\tilde{k}(t')|H(t)|k(t')\rangle\rd t'.
\end{eqnarray}
For more than two biorthogonal states, the similar procedure can be performed.

In conclusion, we investigated and also suggested a general formalism for the geometric phases between any two states, biorthogonal or not, in a finite-dimensional non-Hermitian quantum dynamical system with a non-degenerate spectrum.
Based on the generalized interference formula, we also re-defined the concept of ``in-phase'' in the non-Hermitian setting, which contributed to the discussion of geometric aspects.
Finally, we gave the counterpart of Manini-Pistolesi non-diagonal geometric phase in the non-Hermitian setting as a typical example.

This work was supported by the National Science Foundation
of China (Grants No. 91021009 and No. 21073110).


\end{CJK*}


\end{document}